\newcommand{\dist}{\ensuremath{\operatorname{dist}\xspace}}
\newcommand\ProblemName{{\sc Minimum Additive $t$-Spanner Problem}}
\theoremstyle{plain}
\newtheorem{theorem}{Theorem}
\newtheorem{lemma}[theorem]{Lemma}
\newtheorem{claim}[theorem]{Claim}
\newtheorem{proposition}[theorem]{Proposition}
\title{An FPT Algorithm for Minimum Additive Spanner Problem}
\author{Yusuke Kobayashi\thanks{Research Institute for Mathematical Sciences, Kyoto University, Japan. Supported by JST ACT-I Grant Number JPMJPR17UB, and JSPS KAKENHI Grant Numbers JP16K16010, 16H03118, and JP18H05291, Japan. Email: yusuke@kurims.kyoto-u.ac.jp}}
\begin{document}

\maketitle

\begin{abstract}
For a positive integer $t$ and a graph $G$, an additive $t$-spanner of $G$ is a spanning subgraph 
in which the distance between every pair of vertices is at most the original distance plus $t$. 
{\sc Minimum Additive $t$-Spanner Problem} is 
to find an additive $t$-spanner with the minimum number of edges in a given graph, 
which is known to be NP-hard. 
Since we need to care about global properties of graphs when we deal with additive $t$-spanners, 
{\sc Minimum Additive $t$-Spanner Problem} is hard to handle, 
and hence only few results are known for it. 
In this paper, 
we study {\sc Minimum Additive $t$-Spanner Problem} from the viewpoint of parameterized complexity. 
We formulate a parameterized version of the problem 
in which the number of removed edges is regarded as a parameter, and give a fixed-parameter algorithm for it. 
We also extend our result to $(\alpha, \beta)$-spanners.  
\end{abstract}

\section{Introduction}

\subsection{Spanners}
\label{sec:spanner}

A {\em spanner} of a graph $G$ is a spanning subgraph of $G$ 
that approximately preserves the distance between every pair of vertices in $G$. 
Spanners were introduced in~\cite{Aw1985,PS1989,PU1989} in the context of synchronization in networks. 
Since then, spanners have been studied with applications to several areas such as
space efficient routing tables~\cite{COWEN200479,PUpfal1989}, 
computation of approximate shortest paths~\cite{Cohen1998,Cohen2000,Elkin2005}, 
distance oracles~\cite{BK2006,Thorup2005}, and so on.

A main topic on spanners is 
trade-offs between the sparsity (i.e., the number of edges) of a spanner and its quality of approximation of the distance,  
and there are several ways to measure the approximation quality. 
In the early studies, the approximation quality of spanners was measured by a multiplicative factor, 
i.e., the ratio between the distance in the spanner and the original distance. 
Formally, for a positive integer $t$ and a graph $G$, a spanning subgraph $H$ of $G$ is said to be a {\em multiplicative $t$-spanner} 
if $\dist_H(u, v) \le t \cdot \dist_G(u, v)$ holds for any pair of vertices $u$ and $v$. 
Here, $\dist_G(u, v)$ (resp.~$\dist_H(u, v)$) denotes the distance between $u$ and $v$ in $G$ (resp.~in $H$). 
A well-known trade-off between the sparsity and the multiplicative factor is as follows:  
for any positive integer $t$ and any graph $G$, 
there exists a $(2t-1)$-spanner with $O(n^{1+1/t})$ edges~\cite{ADDJS1993}, 
where $n$ denotes the number of vertices in $G$.
This bound is conjectured to be tight based on the popular Girth Conjecture of Erd\H{o}s~\cite{Erdos1964}. 

Another natural measure of the approximation quality
is the difference between the distance in the spanner and the original distance. 
For a positive integer $t$ and a graph $G$, a spanning subgraph $H$ of $G$ is said to be an {\em additive $t$-spanner} 
if $\dist_H(u, v) \le \dist_G(u, v) + t$ holds for any pair of vertices $u$ and $v$. 
Since an additive spanner was introduced in~\cite{LS1991,LS93}, 
trade-offs between the sparsity and the additive term have been actively studied. 
It is shown in~\cite{ACIM1999,EP2004} that 
every graph has an additive $2$-spanner with $O(n^{3/2})$ edges. 
In addition, every graph has an additive $4$-spanner with $O(n^{7/5} {\rm poly}(\log n) )$ edges~\cite{Che2013}, and
every graph has an additive $6$-spanner with $O(n^{4/3})$ edges~\cite{BKMP2010}. 
On the negative side, it is shown in~\cite{Abboud2017} that 
these bounds cannot be improved to $O(n^{4/3 - \epsilon})$ for any $\epsilon > 0$.  

As a common generalization of these two concepts, 
$(\alpha, \beta)$-spanners have also been studied in the literature. 
For $\alpha \ge 1$, $\beta \ge 0$, and a graph $G$, 
a spanning subgraph $H$ of $G$ is said to be an {\em $(\alpha, \beta)$-spanner} 
if $\dist_H(u, v) \le \alpha \cdot \dist_G(u, v) + \beta$ holds for any pair of vertices $u$ and $v$. 
See~\cite{Bollobas2005,Elkin:2018,Knu2014,Pettie2009,Roditty2005,Thorup2006,Woodruff2006,Woodruff2010} for other results on 
trade-offs between the sparsity of a spanner and its approximation quality.

In this paper, we consider a classical but natural and important problem that finds a spanner of minimum size. 
In particular, we focus on additive $t$-spanners and 
consider the following problem for a positive integer $t$. 

\smallskip

\begin{description}
\item[\ProblemName]
\item[Instance.]
A graph $G=(V, E)$.
\item[Question.]
Find an additive $t$-spanner $H = (V, E_H)$ of $G$ that minimizes $|E_H|$. 
\end{description}

\smallskip

{\sc Minimum Multiplicative $t$-Spanner Problem} 
and {\sc Minimum $(\alpha, \beta)$-Spanner Problem} 
are defined in the same way. 
Such a problem is sometimes called {\sc Sparsest Spanner Problem}. 

Although additive $t$-spanners have attracted attention as described above, 
there are only few results on \ProblemName. 
For any positive integer $t$, 
\ProblemName\ is shown to be NP-hard in~\cite{LS93}. 
Every connected interval graph has an additive $2$-spanner that is a spanning tree~\cite{Kratsch2003},  
which implies that {\sc Minimum Additive $t$-Spanner Problem} in interval graphs can be solved in polynomial time for $t \ge 2$. 
The same result holds for  AT-free graphs~\cite{Kratsch2003}. 
It is shown in~\cite{Chepoi2005} that every chordal graph has an additive $4$-spanner with at most $2n-2$ edges, 
which implies that there exists a $2$-approximation algorithm for {\sc Minimum Additive $4$-Spanner Problem} in chordal graphs. 
To the best of our knowledge, 
no other positive results (e.g., polynomial-time algorithms for special cases or approximation algorithms)
exist for \ProblemName, 
which is in contrast to the fact that 
{\sc Minimum Multiplicative $t$-Spanner Problem} 
has been actively studied 
from the viewpoints of graph classes and approximation algorithms
(see Section~\ref{sec:related}). 

We make a remark on a difference between 
multiplicative $t$-spanners and additive $t$-spanners. 
As in~\cite{CK1994,MADANLAL199697,kobayashi2018}, 
multiplicative $t$-spanners can be characterized as follows: 
a subgraph $H = (V, E_H)$ of $G=(V, E)$ is a multiplicative $t$-spanner if and only if
$\dist_H (u, v) \le t$ holds for any $uv \in E \setminus E_H$. 
This characterization means that
we only need to care about local properties of graphs
when we deal with multiplicative $t$-spanners.  
In contrast, for additive $t$-spanners,
no such characterization exists, and hence 
we have to consider global properties of graphs. 
In this sense, handling 
\ProblemName\ is much harder than
{\sc Minimum Multiplicative $t$-Spanner Problem}, 
which is a reason why 
only few results exist for \ProblemName.

\subsection{Our Results}

In this paper, we consider \ProblemName\ from the viewpoint of fixed-parameter tractability
and give a first fixed-parameter algorithm for it. 
A parameterized version of {\sc Minimum Multiplicative $t$-Spanner Problem} 
is studied in~\cite{kobayashi2018}. 
Since an additive (or multiplicative) $t$-spanner of a connected graph contains $\Omega(|V|)$ edges, 
the number of edges of a minimum additive (or multiplicative) $t$-spanner is not an appropriate parameter. 
Therefore, as in~\cite{kobayashi2018}, a parameter is defined as the number of edges that are removed to obtain an additive (or multiplicative) $t$-spanner. 
Note that the same parameterization is also adopted in~\cite{Bang2018} for another network design problem. 
Our problem is formulated as follows. 

\smallskip

\begin{description}
\item[{\sc Parameterized} \ProblemName]
\item[Instance.]
A graph $G=(V, E)$.
\item[Parameter.]
A positive integer $k$. 
\item[Question.]
Find an edge set $E' \subseteq E$ with $|E'| \ge k$ such that $H = (V, E \setminus E')$ is an additive $t$-spanner of $G$
or conclude that such $E'$ does not exist. 
\end{description}

\smallskip

Note that if there exists a solution of size at least $k$, then its subset of size $k$ is also a solution, 
which means that  we can replace the condition $|E'| \ge k$ with $|E'| = k$ in the problem. 
In this paper, we show that there exists a fixed-parameter algorithm for this problem, 
where an algorithm is called a {\em fixed-parameter algorithm} (or an {\em FPT algorithm}) if its running time is bounded by $f(k) (|V|+|E|)^{O(1)}$ for some function $f$. 
See~\cite{Cyganetal2015,FG2006,Nie2006} for more detail. 
Formally, our result is stated as follows. 

\begin{theorem}
\label{thm:fpt}
For a positive integer $t$, 
there exists a fixed-parameter algorithm for {\sc Parameterized} \ProblemName\ 
that runs in $(t+1)^{O(k^2+ t k)} |V| |E|$ time. 
In particular, the running time is $2^{O(k^2)} |V| |E|$ if $t$ is fixed. 
\end{theorem}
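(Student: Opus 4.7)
The plan is a bounded-search-tree FPT algorithm that grows the removed-edge set $E'$ edge by edge while simultaneously committing, for each $e = uv$ added to $E'$, a backup $u$--$v$ walk $P_e$ of length at most $t+1$ in the surviving subgraph $H = (V, E \setminus E')$. Such a backup must exist in any valid solution because $\dist_H(u,v) \le 1 + t$ whenever $uv \in E'$. As preprocessing I would run BFS from every vertex to obtain $\dist_G$ in $O(|V|(|V|+|E|)) = O(|V||E|)$ time, which matches the leading factor of the claimed bound and is reused throughout.

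The recursion has depth $k$: each round commits one new edge to $E'$ together with its backup walk. To avoid an exponential branching factor from enumerating length-$(t+1)$ paths as vertex sequences, I would enumerate each candidate $P_e$ by its combinatorial \emph{layer pattern} (the sequence of distance-to-$u$ changes along its $\le t+1$ edges, of which there are at most $(t+1)^{O(t)}$ patterns) combined with how $P_e$ attaches to the $O(k(t+1))$ edges of the already-committed backbone $\bigcup_{e' \in E'} P_{e'}$ (an extra $(t+1)^{O(k)}$ choices). This gives a branching factor of $(t+1)^{O(k+t)}$ per round and a total search-tree size of $(t+1)^{O(k(k+t))} = (t+1)^{O(k^2 + tk)}$. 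At each leaf I verify that $(V, E \setminus E')$ is an additive $t$-spanner by comparing fresh BFS distances in $H$ with the pre-computed $\dist_G$, at cost $O(|V||E|)$ per leaf. Correctness of the edge-by-edge growth relies on the downward-closure observation that any subset of a valid $E'$ is again valid (removing fewer edges only decreases distances in the surviving subgraph), so it suffices to search for a maximal valid $E'$ and check whether its size is at least $k$.

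The main obstacle is designing the branching rule. Unlike for multiplicative spanners~\cite{kobayashi2018}, where a single removed edge can be certified locally by a short detour between its endpoints, the additive-spanner condition is genuinely global: committing to remove one edge can invalidate $\dist_H(x,y) \le \dist_G(x,y) + t$ for pairs $(x,y)$ arbitrarily far from the removed edge. The crux will be a structural lemma certifying that the $k$ backup walks $\{P_e\}$, together with the edges of $G$ not lying on their backbone, provide enough slack to justify the global spanner condition locally at every internal node of the recursion tree. Establishing this, and matching it with an enumeration of $(t+1)^{O(k+t)}$ branches per round rather than the naive $\Omega(|V|^{t})$, is the heart of the argument and is what produces the $k^2 + tk$ exponent.
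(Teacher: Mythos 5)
There is a genuine gap, and you have in fact named it yourself: the ``structural lemma'' that is supposed to let you certify the global additive condition locally at each node of the recursion tree is exactly the missing content, and without it the algorithm has neither a correct branching rule nor a correctness proof. Two concrete failures. First, your branching rule does not actually bound the number of candidate edges to add to $E'$. A ``layer pattern'' plus an attachment to the current backbone constrains only the shape of the backup walk; polynomially many distinct edges of $G$ (all those lying on short cycles) are consistent with the same pattern, and they are not interchangeable under the additive condition, since removing different edges can hurt far-away vertex pairs in different ways. So the claimed $(t+1)^{O(k+t)}$ branching factor does not enumerate a set of choices that provably contains a valid solution, and the downward-closure remark does not rescue this: maximal valid sets need not be maximum, so greedily extending one partial solution (or even $f(k,t)$ many shaped extensions of it) can miss every valid $E'$ of size $k$. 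Second, even granting the branching, a backup $u$--$v$ walk of length at most $t+1$ for each removed edge $uv$ does not imply the spanner property: a single shortest $x$--$y$ path may cross several removed edges, and replacing each by its own detour accumulates up to $kt$ additive error. Any correct argument must ensure that all removed edges crossed by one shortest path can be bypassed at a total extra cost of $t$, which is a genuinely global constraint and is precisely why the paper says no local characterization exists for additive spanners.

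The paper resolves this not by local certification but by a win--win argument that your proposal does not contain. Let $F$ be the set of edges lying on cycles of length at most $t+2$; every valid $E'$ satisfies $E'\subseteq F$. If $|F|$ is bounded by a function of $t$ and $k$, brute force over $k$-subsets of $F$ works. Otherwise $G$ has many short cycles, and the paper proves a solution of size $k$ \emph{always exists} inside the edge set of a bounded family of such cycles: many short cycles yield either many short paths between one pair (handled directly in Proposition~\ref{prop:12} by removing the ``middle'' edges of $k$ edge-disjoint paths, so that any shortest path meeting several removed edges has a single detour of cost at most $t+1$), or many edge-disjoint short cycles, from which Lemma~\ref{lem:13} extracts a sequence satisfying the ordering condition {\rm ($\star$)}, and Lemma~\ref{lem:14} then greedily picks one edge per cycle so that the detours of distinct removed edges provably do not interfere. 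This existence-plus-bounded-location structure is what makes brute force at the end legitimate and yields the $(t+1)^{O(k^2+tk)}|V||E|$ bound; your recursion-tree outline, as it stands, has no substitute for it.
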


This result implies that there exists a fixed-parameter algorithm for the problem 
even when $t+k$ is the parameter. 
By using almost the same argument, 
we can show that a parameterized version of {\sc Minimum $(\alpha, \beta)$-Spanner Problem} 
is also fixed-parameter tractable. 
We define {\sc Parameterized Minimum $(\alpha, \beta)$-Spanner Problem} in the same way as 
{\sc Parameterized} \ProblemName. 

\begin{theorem}
\label{thm:fpt2}
For real numbers $\alpha \ge 1$ and $\beta \ge 0$, 
there exists a fixed-parameter algorithm for {\sc Parameterized Minimum $(\alpha, \beta)$-Spanner Problem} 
that runs in $(\alpha + \beta)^{O(k^2+ (\alpha + \beta) k)} |V| |E|$ time. 
\end{theorem}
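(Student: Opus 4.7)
The strategy is to reuse the algorithmic framework of Theorem~\ref{thm:fpt} with the role of the uniform additive slack $t$ played by the pair-dependent slack $(\alpha-1)\dist_G(u,v)+\beta$ permitted by the $(\alpha,\beta)$-spanner constraint. The branching and recursion structure of the algorithm should remain unchanged, and the running-time bound transforms directly, with the dependence on $t+1$ replaced by a dependence on $\alpha+\beta$.

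Concretely, I would proceed in three steps. First, I would locate each occurrence of $t$ in the algorithm and analysis underlying Theorem~\ref{thm:fpt}: $t$ typically enters as an upper bound on the length of detour paths examined by the branching procedure, and as a slack term in the distance comparisons used to certify the spanner condition. Each such occurrence is replaced by its $(\alpha,\beta)$-analogue---length bound $\alpha\dist_G(u,v)+\beta$ for the pair $(u,v)$ being repaired, and slack $(\alpha-1)\dist_G(u,v)+\beta$ in distance comparisons. Second, I would show that the branching can be restricted to pairs of $G$-distance at most some $D = O((\alpha+\beta)k)$, so that the detour-length bound becomes an absolute $O((\alpha+\beta)^2 k)$; pairs at larger $G$-distance enjoy enough slack to absorb the total detour around the at most $k$ deleted edges, once the spanner inequality is certified on the endpoints of each deleted edge. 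Third, I would rerun the branching-tree analysis: each node branches over $(\alpha+\beta)^{O(\alpha+\beta)}$ choices of a detour path, the recursion depth is at most $k$, and verifying the spanner condition at the leaves takes $O(|V||E|)$ time via breadth-first searches from each vertex. Combining these bounds yields the advertised $(\alpha+\beta)^{O(k^2+(\alpha+\beta)k)}|V||E|$ running time.

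The main obstacle is the second step, namely certifying that verifying the spanner inequality on $G$-short pairs suffices to conclude the $(\alpha,\beta)$-spanner property globally. In the purely additive case ($\alpha=1$) the slack does not grow with distance, so the propagation to long pairs must exploit the global budget of $k$ deleted edges together with a careful decomposition of shortest paths in $G$ around them; I expect this argument to appear already in the proof of Theorem~\ref{thm:fpt}. In the $(\alpha,\beta)$ setting the step is strictly easier, because the multiplicative factor $\alpha\ge 1$ contributes additional slack growing linearly with $\dist_G(u,v)$, which can only help absorb the cumulative stretch from multiple detours. Once this reduction is in place, the remainder of the analysis is a mechanical translation of the additive case, which is why ``almost the same argument'' suffices.
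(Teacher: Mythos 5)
There is a genuine gap, on two levels. First, your plan presupposes that the algorithm behind Theorem~\ref{thm:fpt} is a bounded-depth branching procedure over detour paths; it is not. That algorithm computes the set $F$ of edges lying on cycles of length at most $t+2$ (any deleted edge must lie in $F$), brute-forces over $k$-subsets of $F$ when $|F|$ is small, and when $|F|$ is large invokes a purely existential argument (Proposition~\ref{prop:12} together with Lemmas~\ref{lem:13} and~\ref{lem:14}) showing that a valid $E'$ always exists inside a bounded collection of short cycles. So the ``mechanical translation'' you describe has no branching tree to translate. Second, and more importantly, your step~2 is the crux and it fails as stated. Certifying the spanner inequality on the endpoints of deleted edges and on pairs at $G$-distance at most $D=O((\alpha+\beta)k)$ gives, for a far pair $x,y$ whose shortest path uses $m\le k$ deleted edges, only $\dist_H(x,y)\le \dist_G(x,y)+m(\alpha+\beta-1)$; to absorb this within the allowed slack one needs $(\alpha-1)\dist_G(x,y)+\beta\ge k(\alpha+\beta-1)$, which forces $D$ of order $k(\alpha+\beta)/(\alpha-1)$ --- unbounded as $\alpha\to 1$ and vacuous at $\alpha=1$, where the slack is the constant $\beta$. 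You acknowledge this and defer exactly that case to a local-to-global propagation argument you ``expect'' to find in the proof of Theorem~\ref{thm:fpt}; no such lemma exists there, and the paper explicitly emphasizes that additive spanners admit no local characterization, which is precisely why the additive case is handled by explicit constructions whose global correctness is verified directly, not by checking a bounded set of pairs.

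You also miss the simple observation that makes the paper's extension work and renders the pair-dependent slack unnecessary. Set $t:=\lfloor \alpha+\beta\rfloor-1$. For any edge $uv$ deleted in an $(\alpha,\beta)$-spanner solution, $\dist_H(u,v)\le \alpha+\beta$, hence by integrality $\dist_H(u,v)\le \lfloor\alpha+\beta\rfloor = t+1$, so deleted edges again lie on cycles of length at most $t+2$ and the same set $F$ and the same dichotomy apply: if $|F|$ is small, brute-force over $k$-subsets of $F$, testing the $(\alpha,\beta)$-condition directly; if $|F|$ is large, the dense-case machinery of Theorem~\ref{thm:fpt} produces an additive $t$-spanner, and since $\dist_G(u,v)\ge 1$ and $\alpha\ge 1$ imply $\dist_G(u,v)+t\le \alpha\cdot\dist_G(u,v)+\beta$, any additive $t$-spanner is automatically an $(\alpha,\beta)$-spanner. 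This two-line reduction, not a re-derivation of the algorithm with distance-dependent slack, is what ``almost the same argument'' refers to; as written, your proposal does not establish the theorem.
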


\subsection{Related Work: Minimum Multiplicative Spanner Problem}
\label{sec:related}

As mentioned in Section~\ref{sec:spanner}, 
there are a lot of studies on {\sc Minimum Multiplicative $t$-Spanner Problem}, 
whereas only few results are known for \ProblemName. 


{\sc Minimum Multiplicative $t$-Spanner Problem} is
NP-hard for any $t \ge 2$ in general graphs~\cite{CAI1994187,PS1989}, 
and there are several results on the problem for some graph classes. 
It is NP-hard even when the input graph is restricted to be planar~\cite{Brandes1997,kobayashi2018}. 
Cai and Keil~\cite{CK1994} showed that 
{\sc Minimum $2$-Spanner Problem} can be solved in linear time 
if the maximum degree of the input graph is at most $4$, 
whereas this problem is NP-hard even if the maximum degree is at most $9$. 
Venkatesan et al.~\cite{VENKATESAN1997143} revealed the complexity 
of {\sc Minimum Multiplicative $t$-Spanner Problem} for several graph classes
 such as chordal graphs, convex bipartite graphs, and split graphs. 
For the weighted version of the problem in which each edge has a positive integer length, 
Cai and Corneil~\cite{CaiCorneil1995} showed the NP-hardness of {\sc Minimum Multiplicative $t$-Spanner Problem} for $t > 1$. 

Another direction of research is to design approximation algorithms for {\sc Minimum Multiplicative $t$-Spanner Problem}.  
Kortsarz~\cite{Kort1994} gave an $O(\log n)$-approximation for $t=2$ 
and 
Elkin and Peleg~\cite{ElkinP2005} gave an $O(n^{2/(t+1)})$-approximation algorithm for $t > 2$.  
On the negative side, for any $t \ge 2$, it is shown in~\cite{Elkin2007} that
no $o(\log n)$-approximation algorithm exists unless $P=NP$. 
Dragan et al.~\cite{Dragan2011} gave an EPTAS for the problem in planar graphs. 
When the input graph is a $4$-connected planar triangulation, 
a PTAS is proposed for {\sc Minimum Multiplicative $2$-Spanner Problem} in \cite{DUCKWORTH200367}. 

A parameterized version of {\sc Minimum Multiplicative $t$-Spanner Problem} 
is introduced in~\cite{kobayashi2018}, and a fixed-parameter algorithm for it is presented in the same paper.



\subsection{Organization}

The remainder of this paper is organized as follows. 
In Section~\ref{sec:pre}, we give some preliminaries. 
In Section~\ref{sec:fpt00}, 
we give an FPT algorithm for {\sc Parameterized} \ProblemName\ and prove Theorem~\ref{thm:fpt}. 
In Section~\ref{sec:extension}, 
we extend the argument in Section~\ref{sec:fpt00} to 
{\sc Parameterized Minimum $(\alpha, \beta)$-Spanner Problem}
and prove Theorem~\ref{thm:fpt2}. 
Finally, in Section~\ref{sec:conclusion}, we make a conclusion.

\section{Preliminaries}
\label{sec:pre}

In this paper, we deal with only undirected graphs with unit length edges. 
Since we can remove all the parallel edges and self-loops when we consider spanners, 
we assume that all the graphs in this paper are simple.  
Let $G=(V, E)$ be a graph. 
For $u, v \in V$, an edge connecting $u$ and $v$ is denoted by $uv$. 
For a subgraph $H$ of $G$, the set of vertices and the set of edges in $H$ are denoted by $V(H)$ and $E(H)$, respectively.  
For an edge $e \in E$, let $G-e$ denote the subgraph $G'=(V, E \setminus \{e\})$.  
We say that an edge set $F \subseteq E$ {\em contains} a path $P$ if $E(P) \subseteq F$. 
For a path $P$ and for two vertices $u, v \in V(P)$, 
let $P[u, v]$ denote the subpath of $P$ between $u$ and $v$. 
For $u, v \in V$, 
let $\dist_G (u,v)$ denote the distance of the shortest path 
between $u$ and $v$ in $G$. 
Note that the length of a path is the number of edges in it. 
If $G$ is clear from the context, $\dist_G(u, v)$ is simply denoted by $\dist(u, v)$. 
For a positive integer $t$, a subgraph $H = (V, E_H)$ of $G=(V, E)$ is said to be an {\em additive $t$-spanner} 
if $\dist_H(u, v) \le \dist_G(u, v) + t$ or $\dist_G(u, v)= + \infty$ holds for any $u, v \in V$. 
For real numbers $\alpha \ge 1$ and $\beta \ge 0$, 
a subgraph $H = (V, E_H)$ of $G=(V, E)$ is said to be an {\em $(\alpha, \beta)$-spanner} 
if $\dist_H(u, v) \le \alpha \cdot \dist_G(u, v) + \beta$ or $\dist_G(u, v)= + \infty$ holds for any $u, v \in V$. 
In what follows, we may assume that the input graph $G=(V, E)$ is connected and $\dist_G(u, v)$ is finite for any $u, v \in V$, 
since we can deal with each connected component separately. 
For a positive integer $p$, let $[p] := \{1, \dots , p\}$.

\section{Proof of Theorem~\ref{thm:fpt} }
\label{sec:fpt00}


\subsection{Outline}
\label{sec:outline}

In this subsection, 
we show an outline of our proof of Theorem~\ref{thm:fpt}. 

Define $F \subseteq E$ as the set of all edges contained in cycles of length at most $t+2$. 
In other words, an edge $e=uv \in E$ is in $F$ if and only if $G-e$ contains a $u$-$v$ path of length at most $t+1$. 
By the definition, if $H=(V, E\setminus E')$ is an additive $t$-spanner of $G$, then 
${\rm dist}_{G-e}(u, v) \le {\rm dist}_H(u, v) \le {\rm dist}_G(u, v) + t = t+1$ holds for each $e=uv \in E'$, 
which implies that $E' \subseteq F$. 
Thus, if $|F|$ is small, then we can solve {\sc Parameterized} \ProblemName\ 
by checking whether $H = (V, E \setminus E')$ is an additive $t$-spanner of $G$ or not
for every subset $E'$ of $F$ with $|E'| = k$.   

If $|F|$ is sufficiently large, then 
there exist many cycles of length at most $t+2$. 
In what follows, we show that 
if $G$ has many cycles of length at most $t+2$, then 
there always exists $E' \subseteq E$ with $|E'| = k$ 
such that $H = (V, E \setminus E')$ is an additive $t$-spanner of $G$. 
To this end, we prove the following statements in Sections~\ref{sec:fpt1}--\ref{sec:fpt3}, respectively. 
\begin{itemize}
\item
If there are many cycles of length at most $t+2$, then 
we can find either 
many edge-disjoint cycles of length at most $t+2$ 
or 
a desired set $E' \subseteq E$ (Section~\ref{sec:fpt1}).  
\item
If there are many edge-disjoint cycles of length at most $t+2$, then 
we can construct a sequence of edge-disjoint cycles with a certain condition (Section~\ref{sec:fpt2}). 
\item
If we have a sequence of edge-disjoint cycles with a certain condition, then 
we can construct a desired set $E' \subseteq E$ (Section~\ref{sec:fpt3}). 
\end{itemize}
Finally, in Section~\ref{sec:fpt4}, we put them together and describe our entire algorithm. 

\subsection{Finding Edge-disjoint Cycles}
\label{sec:fpt1}

The objective of this subsection is 
to show that if there are many cycles of length at most $t+2$, then 
we can find either 
many edge-disjoint cycles of length at most $t+2$ 
or 
a desired set $E' \subseteq E$.  
We first show the following lemma. 

\begin{lemma}
\label{lem:11}
For positive integers $k$ and $\ell$, there exists an integer $f_1(k, \ell) = (k \ell)^{O(\ell)}$ satisfying the following condition. 
For any pair of distinct vertices $u, v \in V$ in a graph $G=(V, E)$, 
if there exists a set $\mathcal P$ of $u$-$v$ paths of length at most $\ell$ with $|\mathcal{P}| \ge f_1(k, \ell)$,  
then $G$ contains two distinct vertices $u', v'\in V$ and $k$ edge-disjoint $u'$-$v'$ paths of length 
at most $\ell - \dist(u, u')-\dist(v, v')$. 
\end{lemma}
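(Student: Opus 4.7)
I plan to prove the lemma by induction on $\ell$, with $f_1(k, 1) = 2$ as the (essentially vacuous) base case, since in a simple graph two distinct vertices are joined by at most one path of length $\le 1$; adopting the convention $f_1(k, 0) = 1$ will also be useful at the boundary. For the inductive step with $|\mathcal{P}| \ge f_1(k, \ell)$, I first greedily extract a maximal edge-disjoint subfamily from $\mathcal P$; if its size is at least $k$, the conclusion holds with $(u', v') = (u, v)$. Otherwise this subfamily uses at most $(k-1)\ell$ edges, every $P \in \mathcal P$ must share one of them, and the pigeonhole principle produces an edge $e = xy$ lying in at least $|\mathcal{P}|/((k-1)\ell)$ paths of $\mathcal P$. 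Splitting once more by the direction in which $e$ is traversed leaves a subcollection $S_1$ of size at least $|\mathcal{P}|/(2k\ell)$, each of whose paths decomposes as $P = P[u, x] \cdot (x, y) \cdot P[y, v]$.

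Next I stratify $S_1$ by $(a, b) := (|P[u, x]|, |P[y, v]|)$, which satisfy $a + b \le \ell - 1$, and look for progress in one of two symmetric ways. In \textit{Subcase I}, if for some $a \ge 1$ the set of distinct $u$-$x$ first halves of length $a$ has cardinality at least $f_1(k, a)$, the inductive hypothesis applied to the pair $(u, x)$ yields $u', v'$ and $k$ edge-disjoint $u'$-$v'$ paths of length at most $a - \dist(u, u') - \dist(x, v')$; a routine triangle-inequality calculation using $\dist(v, x) \le b + 1$ and $a + b + 1 \le \ell$ upgrades this estimate to the required bound $\ell - \dist(u, u') - \dist(v, v')$. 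In \textit{Subcase II}, symmetrically, if some first half $P_1^*$ admits at least $f_1(k, b)$ distinct length-$b$ completions for some $b \ge 1$, then induction applied to $(y, v)$ finishes. If neither subcase fires, each stratum obeys $|S_1^{(a, b)}| \le f_1(k, a) f_1(k, b)$ (the degenerate cases $u = x$ and $y = v$ being absorbed by the convention $f_1(k, 0) = 1$), and summing over the $O(\ell^2)$ pairs gives
\[
|S_1| \le \sum_{a + b \le \ell - 1} f_1(k, a) f_1(k, b) \le O(\ell^2) \cdot f_1(k, \ell - 1),
\]
where I use the inequality $f_1(k, a) f_1(k, b) \le f_1(k, a + b)$, itself a consequence of $a^a b^b \le (a + b)^{a + b}$ together with the intended form $f_1(k, n) = (Ckn^3)^n$. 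Combined with $|S_1| \ge |\mathcal P|/(2k\ell)$ this forces a contradiction whenever $f_1(k, \ell) \ge C' k \ell^3 f_1(k, \ell - 1)$ for an absolute constant $C'$, and unrolling the recurrence produces $f_1(k, \ell) \le (C' k \ell^3)^\ell = (k\ell)^{O(\ell)}$.

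The main obstacle is obtaining the target $(k\ell)^{O(\ell)}$ rather than the doubly exponential $(k\ell)^{2^{\Theta(\ell)}}$ bound that a naive analysis yields. The naive version would apply induction to the two halves simultaneously through the trivial estimate $|S_1| \le |\mathcal P_1| \cdot |\mathcal P_2|$, forcing $\max(|\mathcal P_1|, |\mathcal P_2|) \ge \sqrt{|S_1|}$ and thereby squaring the bound at every step of the recursion. The crucial device for avoiding this squaring is the finer stratification by the length pair $(a, b)$: the resulting additive decomposition $\sum f_1(k, a) f_1(k, b)$ is controlled via the convexity inequality $a^a b^b \le (a + b)^{a + b}$, so that each term is charged to a single $f_1(k, \ell - 1)$ factor and the recursion stays linear in $\ell$. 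A minor technicality I need to handle carefully is the boundary $a = 0$ or $b = 0$, which corresponds to $u = x$ or $y = v$ and is what motivates the convention $f_1(k, 0) = 1$.
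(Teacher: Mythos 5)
Your proposal is correct and follows essentially the same route as the paper's proof: induction on $\ell$, a dichotomy between extracting $k$ edge-disjoint $u$-$v$ paths greedily and finding a heavy edge $e=xy$, then recursing on one side of the split at $e$ and transferring the length bound via the triangle inequality, yielding the recurrence $f_1(k,\ell)\le O(k\ell^3)\,f_1(k,\ell-1)$ and hence $(k\ell)^{O(\ell)}$. Your bookkeeping differs only cosmetically (maximal edge-disjoint family plus pigeonhole instead of the ``every edge is light'' case, splitting into prefix/edge/suffix with the $f_1(k,0)=1$ convention instead of splitting at the vertex $x\notin\{u,v\}$, and bounding strata by $f_1(k,a)f_1(k,b)\le f_1(k,a+b)$ rather than forcing one factor of a product of path counts to be large), so no further comparison is needed.
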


\begin{proof}
We show that $f_1(k, \ell) : = 2 (k \ell^3)^{\ell-1}$ satisfies the condition by induction on $\ell$. 
The claim is obvious when $\ell=1$, because $|\mathcal{P}| \le 1$ holds as $G$ is simple and $f_1(k, 1)=2$. 
Thus, it suffices to consider the case of $\ell \ge 2$. 
Let $\mathcal P$ be a set of $u$-$v$ paths of length at most $\ell$ with $|\mathcal{P}| \ge f_1(k, \ell)$. 
We consider the following two cases separately. 

We first consider the case when $|\{ P \in \mathcal P \mid e \in E(P) \}| < \frac{f_1(k, \ell)}{k \ell}$ holds for any $e \in E$.  
In this case, $|\{ Q \in \mathcal P \mid E(P) \cap E(Q) \not= \emptyset \}| < \frac{f_1(k, \ell)}{k}$ for any $P \in \mathcal P$. 
This shows that we can take $k$ edge-disjoint $u$-$v$ paths in $\mathcal P$ by a greedy algorithm
(i.e., repeatedly taking a $u$-$v$ path $P$ in $\mathcal P$ and removing all the paths sharing an edge with $P$).  
They form a desired set of paths in which $u'=u$ and $v'=v$. 

We next consider the case when there exists an edge $e = xy \in E$ such that 
$|\{ P \in \mathcal P \mid e \in E(P) \}| \ge \frac{f_1(k, \ell)}{k \ell} = 2 \ell^2 (k \ell^3)^{\ell-2}$.  
Without loss of generality, we may assume that $x \not\in \{u, v\}$. 
For $i=1, \dots , \ell-1$, let $\mathcal P^i_{ux}$ be the set of all $u$-$x$ paths of length $i$ 
and $\mathcal P^i_{xv}$ be the set of all $x$-$v$ paths of length $i$. 
Then, since each path $P \in \mathcal P$ containing $e$ can be divided into a $u$-$x$ path and an $x$-$v$ path, 
we obtain 
$$
\sum_{i+j \le \ell} |\mathcal P^i_{ux}| \cdot |\mathcal P^j_{xv}| \ge |\{ P \in \mathcal P \mid e \in E(P) \}| \ge 2 \ell^2 (k \ell^3)^{\ell-2}. 
$$
Since the number of pairs $(i, j)$ with $i+j \le \ell$ is at most $\frac{\ell (\ell -1)}{2} < \frac{\ell^2}{2}$, 
there exist $i, j \in [\ell - 1]$ with $i+j \le \ell$ such that 
$$
|\mathcal P^i_{ux}| \cdot |\mathcal P^j_{xv}| \ge 2 \ell^2 (k \ell^3)^{\ell-2} \cdot \frac{2}{\ell^2} \ge 2 (k \ell^3)^{i-1} \cdot 2 (k \ell^3)^{j-1} \ge f_1(k, i) \cdot f_1(k, j).  
$$
This shows that either $|\mathcal P^i_{ux}| \ge f_1(k, i)$ or $|\mathcal P^j_{xv}| \ge f_1(k, j)$ holds. 
By induction hypothesis,
if $|\mathcal P^i_{ux}| \ge f_1(k, i)$, then 
there exist $u', v'\in V$ and $k$ edge-disjoint $u'$-$v'$ paths of length 
at most 
\begin{align*}
i - \dist(u, u')-\dist(x, v') 
&\le \ell - j - \dist(u, u')-\dist(x, v') \\
&\le \ell - \dist(x, v) - \dist(u, u')-\dist(x, v') \\
&\le \ell - \dist(u, u')-\dist(v, v').  
\end{align*}
Thus, they form a desired set of paths. 
The same argument can be applied when $|\mathcal P^j_{xv}| \ge f_1(k, j)$.
\end{proof}

By using this lemma, we obtain the following proposition. 

\begin{proposition}
\label{prop:12}
Let $G=(V, E)$ be a graph and 
$\mathcal C$ be 
a set of cycles of length at most $t+2$. 
Let $N$ be a positive integer and $f_1$ be a function as in Lemma~\ref{lem:11}. 
If $|\mathcal{C}| \ge N (t+2) f_1(k+t+1, t+1)$, then 
we have one of the following. 
\begin{itemize}
\item
There exist 
$N$ edge-disjoint cycles 
in $\mathcal C$. 
\item
There exists 
$E' \subseteq \bigcup_{C \in \mathcal{C}} E(C)$ with $|E'| = k$ such that 
$H=(V, E\setminus E')$ is an additive $t$-spanner of $G$. 
\end{itemize}
\end{proposition}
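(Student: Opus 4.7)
The plan is to first try to pack edge-disjoint cycles greedily, and, failing to reach $N$, to concentrate on a single edge lying in many cycles of $\mathcal{C}$ and invoke Lemma~\ref{lem:11}. I would begin by constructing a maximal edge-disjoint subcollection $\mathcal{C}' \subseteq \mathcal{C}$. If $|\mathcal{C}'| \ge N$ we are done with the first conclusion, so assume $|\mathcal{C}'| < N$. Then $F := \bigcup_{C \in \mathcal{C}'} E(C)$ satisfies $|F| < N(t+2)$, and by maximality of $\mathcal{C}'$ every cycle of $\mathcal{C}$ must meet $F$ in at least one edge. The hypothesis $|\mathcal{C}| \ge N(t+2)\, f_1(k+t+1, t+1)$ together with pigeonhole then yields an edge $e^* = uv \in F$ lying in at least $f_1(k+t+1, t+1)$ cycles of $\mathcal{C}$.

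Each such cycle $C$ contributes a distinct $u$-$v$ path $C - e^*$ of length at most $t+1$, so I would apply Lemma~\ref{lem:11} with parameters $k+t+1$ and $t+1$ to obtain vertices $u', v' \in V$ and $k+t+1$ edge-disjoint $u'$-$v'$ paths $P_1, \ldots, P_{k+t+1}$, each of length at most $\ell' := (t+1) - \dist(u,u') - \dist(v,v')$. Modulo a small technical refinement of the lemma, these paths can be taken to use only edges of $\bigcup_{C \in \mathcal{C}} E(C)$, since they can be constructed out of sub-paths of the initial $u$-$v$ paths $C - e^*$. Two complementary upper bounds on $\dist_G(u',v')$ will drive the detour analysis: $\dist_G(u',v') \le \ell'$ (use any $P_i$) and $\dist_G(u',v') \le t+2-\ell'$ (use a shortest path from $u'$ to $u$, the edge $e^*$, and a shortest path from $v$ to $v'$), giving $\dist_G(u',v') \le \min(\ell', t+2-\ell') \le (t+2)/2$.

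I would then construct $E'$ by choosing, from each of $P_1, \ldots, P_k$, a single edge to delete, so that $P_{k+1}, \ldots, P_{k+t+1}$ remain intact in $H = (V, E \setminus E')$. The choice of edge within each $P_i$ is tuned so that the endpoints of the removed edges sit consistently on the $P_i$'s (for instance, each deleted edge is taken adjacent to the same hub vertex, say $v'$). To verify the additive $t$-spanner property, take any shortest $G$-path $Q$ from $x$ to $y$ meeting $E'$, let $a, b$ be the first and last vertices along $Q$ incident to an edge of $E'$, and replace $Q[a, b]$ by a single detour routed through the hub $u', v'$: travel from $a$ along its containing $P_i$ to $u'$ without crossing a cut edge, cross to $v'$ via an intact $P_j$ or via the $u$-$v$ shortcut through $e^*$, whichever is shorter, and come back to $b$ along its containing $P_i$. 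Combined with the untouched prefix $Q[x,a]$ and suffix $Q[b,y]$, this yields an $x$-$y$ walk in $H$.

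The main obstacle will be establishing the detour bound. Per-edge rerouting yields $\Theta(k\ell')$ additive detour, which is far too large, so only a single global detour can succeed. The crux is the trade-off $\dist_G(u', v') \le \min(\ell', t+2-\ell')$: when $\ell'$ is close to $t+1$ the direct $P_j$-route is long but the $e^*$ shortcut is short, and vice versa. Coupling this with the choice of removed edges all adjacent to $v'$, so that every vertex on any $P_i$ for $i \le k$ remains within distance $\ell' - 1$ of $u'$ inside $P_i \setminus E'$, lets me bound the combined detour by $t$ and conclude that $H$ is an additive $t$-spanner.
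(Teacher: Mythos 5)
Your overall architecture follows the paper: a packing-versus-popular-edge dichotomy (your maximal packing plus pigeonhole is a fine substitute for the paper's per-edge counting), an application of Lemma~\ref{lem:11} with parameters $k+t+1$ and $t+1$ to the paths $C-e^*$, deletion of one edge from each of $k$ of the resulting edge-disjoint $u'$-$v'$ paths, and a single global detour through the hubs. The gap is in the one place where you commit to a concrete choice: deleting, from each $P_i$ ($i\le k$), the edge adjacent to the hub $v'$, say $e_i=w_iv'$. With that choice the endpoints of the deleted edges are completely unbalanced: every $w_i$ is within $\ell'-1$ of $u'$ in $H$, but after the deletions there is no short way back to $w_j$ from the $v'$-side at all. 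Concretely, take $\ell'=t+1$ (so $u'=u$, $v'=v$, and the shortcut is just $e^*$), let all $P_i$ have length exactly $t+1$, and consider a shortest path of $G$ that enters $v'$ along one deleted spoke $e_i=w_iv'$ and leaves along another, $e_j=w_jv'$; since all deleted edges are incident to $v'$, these two are consecutive on the path, so here $a=w_i$, $b=w_j$ and $\dist_G(a,b)=2$, while the only detour your construction provides is $w_i\to u'\to w_j$ along $P_i$ and $P_j$, of length up to $2(\ell'-1)=2t$, against a budget of $\dist_G(a,b)+t=t+2$. This fails for every $t\ge 3$, and a graph consisting of nothing but $e^*$ and the edge-disjoint $u$-$v$ paths realizes the failure (there $H$ is genuinely not an additive $t$-spanner), so no hidden alternative routes can rescue the argument. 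A second, smaller issue: even in the single-deleted-edge case $a=w_i$, $b=v'$, your bound $t+1$ needs the route $Q_u+e^*+Q_v$ of length $t+2-\ell'$ to survive in $H$, but your deleted edges are exactly edges incident to $v'$ and may include the last edge of $Q_v$; keeping $P_{k+1},\dots,P_{k+t+1}$ intact does not repair this, since those only give the weaker bound $(\ell'-1)+\ell'=2\ell'-1$.

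Both problems are what the paper's two devices are designed for, and they use the $+\,(t+1)$ slack differently from you. First, among the $k+t+1$ edge-disjoint paths, the at most $|E(Q_u)|+|E(Q_v)|+1=t+2-\ell'\le t+1$ paths meeting $E(Q_u)\cup E(Q_v)\cup\{e^*\}$ are discarded, so the $u'$-$v'$ shortcut of length $t+2-\ell'$ is guaranteed to survive in $H$; the reserve paths are not needed as spare routes at all. Second, from each of the $k$ remaining paths one deletes the \emph{middle} edge $e_i=x_iy_i$, so that $x_i$ stays within $\lfloor(\ell'-1)/2\rfloor$ of $u'$ and $y_i$ within $\lceil(\ell'-1)/2\rceil$ of $v'$ in $H$; then every pair of endpoints of deleted edges is at $\dist_H$ at most $t+1$ ($x_i$--$x_j$ via $u'$, $y_i$--$y_j$ via $v'$, $x_i$--$y_j$ via the intact shortcut), and the single-detour argument works no matter how many deleted edges the shortest path uses. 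Your hub-adjacent choice destroys exactly this balance, so the final claim that the combined detour is at most $t$ cannot be established as stated; replacing it with the middle-edge choice plus the path-selection step closes the gap. (Your side remark that Lemma~\ref{lem:11} should be refined so the returned paths use only edges of $\bigcup_{C\in\mathcal{C}}E(C)$ is a fair point, but it is equally implicit in the paper and is not where the difficulty lies.)
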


\begin{proof}
For each edge $e \in E$, let $\mathcal{C}_e := \{ C \in \mathcal C \mid e \in E(C) \}$. 
We first consider the case when $|\mathcal{C}_e| < f_1(k+t+1, t+1)$ holds for any $e \in E$.  
In this case, $|\{ C' \in \mathcal C \mid E(C) \cap E(C') \not= \emptyset \}| < (t+2) f_1(k+t+1, t+1)$ for any $C \in \mathcal{C}$. 
This shows that we can take $N$ edge-disjoint cycles in $\mathcal C$ by a greedy algorithm 
(i.e., repeatedly taking a cycle $C$ in $\mathcal C$ and removing all the cycles sharing an edge with $C$), 
because $|\mathcal C| \ge N (t+2) f_1(k+t+1, t+1)$. 

We next consider the case when there exists an edge $e=uv \in E$ such that 
$|\mathcal{C}_e| \ge f_1(k+t+1, t+1)$.  
Since $\mathcal P := \{C-e \mid C \in \mathcal{C}_e \}$ consists of 
$u$-$v$ paths of length at most $t+1$, by Lemma~\ref{lem:11}, 
$G$ contains two vertices $u', v'\in V$ and a set $\mathcal{P}'$ of $k+t+1$ edge-disjoint $u'$-$v'$ paths of length 
at most $t' := t+1 - \dist_G(u, u')-\dist_G(v, v')$. 
Let $Q_u$ and $Q_v$ be a shortest $u$-$u'$ path and a shortest $v$-$v'$ path, respectively. 
Since $|E(Q_u)| + |E(Q_v)| + 1 = t+2-t' \le t+1$, 
there exists $\mathcal{P}'' \subseteq \mathcal{P}'$ with $|\mathcal{P}''|=k$ 
such that each path in $\mathcal{P}''$ does not contain edges in $E(Q_u) \cup E(Q_v) \cup\{e\}$. 
Let $P_1, \dots , P_k$ denote the paths in $\mathcal{P}''$. 
For $i=1, \dots , k$, let $e_i$ be the middle edge of $P_i$ (see Fig.~\ref{fig:01}). 
Formally, we take $e_i = x_i y_i$ so that 
$P_i [u', x_i]$ contains $\lfloor \frac{|E(P_i)|-1}{2} \rfloor \le \lfloor \frac{t'-1}{2}\rfloor$ edges and
$P_i [y_i, v']$ contains $\lceil \frac{|E(P_i)|-1}{2} \rceil \le \lceil \frac{t'-1}{2}\rceil$ edges. 
Define $E' := \{e_1, \dots , e_k\}$ and consider the graph $H = (V, E \setminus E')$. 

\begin{figure}[htbp]
  \begin{center}
	\vspace{-10pt}
    \includegraphics[width=5.0cm]{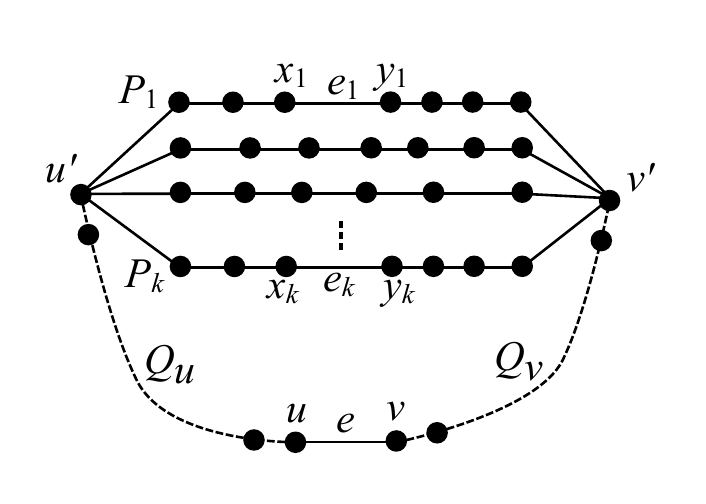}
	\vspace{-10pt}
    \caption{Definition of $e_1, \dots , e_k$ in Proposition~\ref{prop:12}.}
    \label{fig:01}
  \end{center}
	\vspace{-10pt}
\end{figure}

We now show that $H$ is an additive $t$-spanner of $G$. 
Let $x$ and $y$ be distinct vertices in $V$ and let $P$ be a shortest $x$-$y$ path in $G$. 
If $E(P) \cap E' = \emptyset$, then it is obvious that $\dist_{H}(x, y) = \dist_{G}(x, y)$. 
If $E(P) \cap E' \not= \emptyset$, then let $P[z, z']$ be the unique minimal subpath of $P$ 
that contains all edges in $E(P) \cap E'$, where 
$x, z, z'$, and $y$ appear in this order along $P$.
Since $z, z' \in \{x_1, y_1, \dots , x_k, y_k\}$, we have $\dist_{H} (z, z') \le t+1$ 
by observing that 
\begin{itemize}
\item
$\dist_{H} (x_i, x_j) \le 2 \cdot \lfloor \frac{t'-1}{2}\rfloor  \le t+1$ for any $i, j \in [k]$,  
\item
$\dist_{H} (y_i, y_j) \le 2 \cdot \lceil \frac{t'-1}{2}\rceil  \le t+1$ for any $i, j \in [k]$, and 
\item
$\dist_{H} (x_i, y_j) \le  \dist_{H} (x_i, u') +  \dist_{H} (u', v') + \dist_{H} (v', y_i) \le \lfloor \frac{t'-1}{2}\rfloor + (t+2-t') + \lceil \frac{t'-1}{2}\rceil \le t+1$ for any $i, j \in [k]$. 
\end{itemize}
Therefore,  
\begin{align*}
\dist_{H} (x, y)
& \le \dist_{H} (x, z) + \dist_{H} (z, z') + \dist_{H} (z', y)  \\ 
&\le \dist_{G} (x, z) + t+1 + \dist_{G} (z', y) \\
& = \dist_{G} (x, y)  - \dist_{G} (z, z') + t+1 \\ 
&\le \dist_{G} (x, y) + t, 
\end{align*}
which shows that $H$ is an additive $t$-spanner of $G$. 
\end{proof}


\subsection{Finding a Good Sequence of Cycles }
\label{sec:fpt2}

In this subsection, we construct 
a sequence of edge-disjoint cycles with a certain condition
when we are given many edge-disjoint cycles.

Let $\mathcal C$ be a set of edge-disjoint cycles of length at most $t+2$. 
For a vertex $v \in V$ and a cycle $C \in \mathcal C$, 
let $P(v, C)$ be a shortest path from $v$ to $V(C)$. 
By choosing an appropriate shortest path for each $v \in V$, 
we may assume that $\bigcup_{v \in V} E(P(v, C))$ forms a forest 
for any cycle $C \in \mathcal C$. 
The objective of this subsection is to find 
a sequence $(C_1, \dots , C_p)$ of 
distinct $p$ cycles $C_1, \dots, C_p \in \mathcal C$ satisfying the following condition: 

\begin{description}
\item[($\star$)]
For any $h, i, j \in [p]$ with $h < i < j$ and for any vertex $v \in V(C_h)$, 
it holds that $E(P(v, C_i)) \cap E(C_j) \not= \emptyset$. 
\end{description}

Roughly speaking, this condition means that 
if $h < i < j$, then removing edges in $E(C_j)$ does not affect the 
distance between $C_h$ and $C_i$.

\begin{lemma}
\label{lem:13}
For any positive integers $t$ and $p$, there exists an integer $f_2(t, p)=O(t^2 p^4)$ satisfying the following condition. 
If $\mathcal C$ is a set of $f_2(t, p)$ edge-disjoint cycles of length at most $t+2$, then 
there exists a sequence $(C_1, \dots , C_p)$ of 
distinct $p$ cycles $C_1, \dots, C_p \in \mathcal C$
satisfying the condition {\rm ($\star$)}. 
\end{lemma}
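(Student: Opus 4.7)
For each ordered pair of distinct cycles $C_a, C_b \in \mathcal{C}$, introduce the shorthand $W(a,b) := \bigcup_{v \in V(C_a)} E(P(v,C_b))$; by the standing convention that $\bigcup_{v\in V} E(P(v,C_b))$ is a forest, $W(a,b)$ is a subforest of $T_{C_b}$ whose leaves lie in $V(C_a)\cup V(C_b)$. In this notation, condition $(\star)$ is equivalent to: $E(C_j) \cap W(h,i) = \emptyset$ for every triple $h < i < j$ in $[p]$.

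The plan is then to build the sequence greedily from left to right. I would maintain a pool $\mathcal{C}^\circ \subseteq \mathcal{C}$ of still-eligible cycles; at step $i$ I pick any $C_i \in \mathcal{C}^\circ$, and then delete from $\mathcal{C}^\circ$ every cycle $C$ such that $E(C) \cap W(h,i) \neq \emptyset$ for some $h<i$, because such a $C$ could never be legally inserted as a later $C_j$ without violating $(\star)$. Since the cycles in $\mathcal{C}$ are pairwise edge-disjoint, the number of cycles deleted when $C_i$ is added is at most $|\bigcup_{h<i} W(h,i)|$, so over the $p$ iterations at most $\sum_{h<i\le p} |W(h,i)|$ cycles are ever removed from $\mathcal{C}^\circ$.

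To match the target $f_2(t,p)=O(t^2p^4)$, the crucial quantitative estimate is then $|W(h,i)| = O(t^2p^2)$ for every pair in the chosen sequence. Since each $W(h,i)$ is the union of at most $t+2$ shortest paths $P(v,C_i)$ for $v\in V(C_h)$, this reduces to forcing each $P(v,C_i)$ to have length $O(tp^2)$; equivalently, all chosen cycles should lie in a common ball of radius $O(tp^2)$. To set this up, I would run an initial pigeonhole pre-processing step: fix any cycle $C^\ast \in \mathcal{C}$, do BFS from $V(C^\ast)$, and group the remaining cycles into shells of width $O(tp^2)$ by the distance of their closest vertex; if $|\mathcal{C}|$ is sufficiently larger than the number of shells, a shell containing $\Omega(tp^2)$ cycles exists, and the greedy argument is then applied inside that shell.

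The main obstacle I foresee is precisely matching the bound $O(t^2 p^4)$ rather than a looser polynomial. Naive pigeonhole over distance shells does not obviously give pairwise bounded distances (a shortest path between two cycles in one shell may leave that shell), so one must either iterate the pigeonhole, or --- likely the cleaner route --- charge each removed cycle directly to an edge of the forest $T_{C_i}$ and use the tree structure to avoid double-counting. This charging argument, together with the fact that each $W(h,i)$ branches from only $|V(C_h)|\le t+2$ sources, should be where the factor $(t+2)^2 = O(t^2)$ naturally enters, while the $p^4$ factor combines the $\binom{p}{2}$ pairs $(h,i)$ with the $\Theta(p^2)$ budget per pair allowed by the targeted $|W(h,i)|=O(t^2p^2)$.
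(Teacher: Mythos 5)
Your greedy skeleton is sound as far as the logic of $(\star)$ goes (and your reading of $(\star)$ as the \emph{empty}-intersection condition agrees with how the paper actually uses it), and the charging step ``number of cycles deleted at step $i$ is at most $|\bigcup_{h<i}W(h,i)|$'' is correct because the cycles are edge-disjoint. The genuine gap is the quantitative heart of the argument: you need $|W(h,i)|=O(t^2p^2)$, and there is no way to force this. The paths $P(v,C_i)$ can be arbitrarily long, and the clustering pre-processing you propose cannot rescue this: take a graph consisting of a very long path with $f_2(t,p)$ pairwise far-apart triangles attached to it. Then every ball (or BFS shell) of radius $O(tp^2)$ around any fixed cycle contains $O(1)$ cycles of $\mathcal C$, and the pigeonhole over shells gives nothing because the number of shells grows with the graph, not with $t$ and $p$; moreover, as you yourself note, even cycles lying in a common shell at distance $R$ from $C^\ast$ may be at mutual distance about $2R$, which is unbounded in terms of $t,p$. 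So the bound $|W(h,i)|=O(t^2p^2)$ simply fails in general, and the vague fix of ``charging removed cycles to edges of $T_{C_i}$'' does not address it, since the relevant quantity is not the length of $P(v,C_i)$ at all.

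What actually has to be controlled is the number of \emph{cycles of $\mathcal C$} whose edges a path $P(v,C_i)$ meets, i.e.\ $|E(P(v,C_i))\cap\bigcup_{C\in\mathcal C}E(C)|$, and this leads to a dichotomy that your proposal is missing. The paper splits into two cases: if some shortest path $P(v,C^\ast)$ meets at least $(3t+1)p$ cycle edges, then one does not run any greedy at all --- one picks $p$ of these edges spaced at least $3t+1$ apart along $P(v,C^\ast)$, takes the (distinct) cycles containing them, and orders them by position along the path; a short distance computation shows this ordering already satisfies $(\star)$. Only in the complementary case, where \emph{every} such path meets fewer than $(3t+1)p$ cycles, does a greedy/counting argument close the proof (the paper does this via forbidden triples, pairs and singletons $\mathcal F_3,\mathcal F_2,\mathcal F_1$; in that case your simpler left-to-right greedy, with the deletion count charged to the number of cycles met by each $P(v,C_i)$ rather than to $|W(h,i)|$, would also give a bound polynomial in $t$ and $p$). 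Without the first, ``long path through many cycles'' mechanism or some substitute for it, your argument cannot be completed, because that is precisely the regime in which your deletion bound blows up.
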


\begin{proof}
We show that $f_2(t, p) := 27 (t+2) (3t+1) p^4$ satisfies the condition in the lemma. 
Let $\mathcal C$ be a set of $f_2(t, p)$ edge-disjoint cycles of length at most $t+2$. 
We consider the following two cases separately. 

\medskip

\textbf{Case 1.}
Suppose that there exist a vertex $v \in V$ and a cycle $C^* \in \mathcal{C}$
such that 
$| E(P(v, C^*)) \cap \bigcup_{C \in \mathcal C} E(C)| \ge (3t+1) p$. 
In this case, we can take edges $e_1, \dots , e_p$ in $E(P(v, C^*)) \cap \bigcup_{C \in \mathcal C} E(C)$
such that $e_1=x_1 y_1, e_2=x_2 y_2, \dots , e_p=x_p y_p$ appear in this order along $P(v, C^*)$ and
the subpath of $P(v, C^*)$ between $x_i$ and $x_{i+1}$ contains at least $3t+1$ edges for $i=1, \dots , p-1$ (see Fig.~\ref{fig:02}). 
For $i=1, \dots , p$, let $C_i \in \mathcal C$ be the cycle containing $e_i$. 
Note that $C_i$ and $C_j$ are distinct if $i \not= j$, since $\dist_{G} (x_i, x_j) \ge 3 t +1 > |E(C_i)|$.

\begin{figure}
 \begin{minipage}{0.5\hsize}
  \begin{center}
    \includegraphics[width=6.5cm]{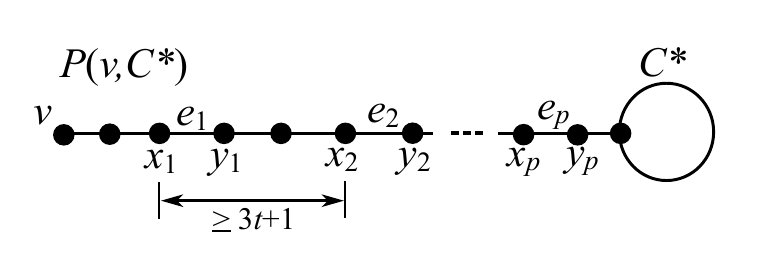}
    \caption{Definition of $e_1, \dots , e_{p}$.}
    \label{fig:02}
  \end{center}
 \end{minipage}
 \begin{minipage}{0.5\hsize}
  \begin{center}
    \includegraphics[width=6.7cm]{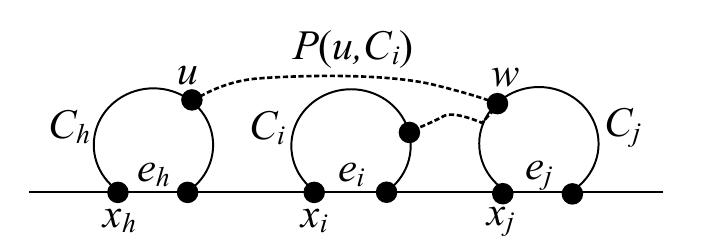}
    \caption{Definition of $w$.}
    \label{fig:03}
  \end{center}
 \end{minipage}
\end{figure}


We now show that $(C_1, \dots, C_p)$ satisfies the condition ($\star$).
Assume to the contrary that there exist 
indices $h, i, j \in [p]$ with $h< i < j$ and 
a vertex $u \in V(C_h)$ such that
$E(P(u, C_i)) \cap E(C_j) \not= \emptyset$. 
Let $w$ be the first vertex in $V(C_j)$ when we traverse $P(u, C_i)$ from $u$ to $V(C_i)$ (see Fig.~\ref{fig:03}). 
Then, we have 
\begin{align*}
\dist (x_h, x_i) + t &\ge \dist (x_h, x_i) + \dist (u, x_h) \ge \dist (u, x_i) \\ 
                        &\ge |E(P(u, C_i))| \ge  \dist (u, w) \\
                        &\ge \dist (x_h, x_j) -  \dist (x_h, u)  - \dist (w, x_j) \ge \dist (x_h, x_j) - 2t \\ 
                        &\ge (\dist (x_h, x_i) + 3t+1) - 2t = \dist (x_h, x_i) + t + 1
\end{align*}
by using $\dist (u, x_h) \le \lfloor \frac{|E(C_h)|}{2} \rfloor \le t$ and $\dist (x_j, w) \le \lfloor \frac{|E(C_j)|}{2} \rfloor \le t$, 
which is a contradiction. 
Therefore, $(C_1, \dots, C_p)$ satisfies the condition ($\star$).


\medskip

\textbf{Case 2.}
Suppose that $| E(P(v, C^*)) \cap \bigcup_{C \in \mathcal C} E(C)| < (3t+1) p$ holds 
for any vertex $v \in V$ and for any cycle $C^* \in \mathcal{C}$,  
which implies that $| \{ C \in \mathcal{C} \mid E(P(v, C^*)) \cap E(C) \not= \emptyset \} | < (3t+1) p$. 
We define $\mathcal F_3 \subseteq \mathcal{C}^3$ by 
$$
\mathcal F_3 := 
\{ (C_h, C_i, C_j) \mid  C_h, C_i, C_j \in \mathcal{C}, 
\ E(P(v, C_i)) \cap E(C_j) \not= \emptyset \mbox{ for some } v \in V(C_h) \}. 
$$
Then, it holds that 
\begin{align*}
|\mathcal F_3| 
&=  \sum_{C_h \in \mathcal C} \sum_{C_i \in \mathcal C} | \{ C_j \in \mathcal{C} \mid E(P(v, C_i)) \cap E(C_j) \not= \emptyset \mbox{ for some } v \in V(C_h) \}| \\
&\le  \sum_{C_h \in \mathcal C} \sum_{C_i \in \mathcal C} \sum_{v \in V(C_h)} | \{ C_j \in \mathcal{C} \mid E(P(v, C_i)) \cap E(C_j) \not= \emptyset \}| \\
&<  \sum_{C_h \in \mathcal C} \sum_{C_i \in \mathcal C} \sum_{v \in V(C_h)} (3t+1) p  \\
&\le (t+2) (3t+1) p | \mathcal{C} |^2.
\end{align*}
We note that $(C_1, \dots , C_p)$ satisfies the condition ($\star$) if and only if 
$(C_{h}, C_{i}, C_{j}) \not \in \mathcal F_3$ holds 
for any $h, i, j \in [p]$ with $h < i < j$. 
That is, $\mathcal F_3$ represents the set of forbidden orderings of three cycles. 
We define $\mathcal F_2 \subseteq \mathcal{C}^2$ and $\mathcal F_1 \subseteq \mathcal{C}$ by 
\begin{align*}
\mathcal F_2 &:= 
\bigg\{ (C_{h}, C_{i}) \in \mathcal{C}^2 \ \bigg| \ |\{C \in \mathcal{C} \mid (C_{h}, C_{i}, C) \in \mathcal{F}_3 \} | \ge \frac{|\mathcal{C}| }{3 p^2}  \bigg\}, \\ 
\mathcal F_1 &:= 
\bigg \{ C_{h} \in \mathcal{C} \ \bigg | \  |\{C \in \mathcal{C} \mid (C_{h}, C) \in \mathcal{F}_2 \} | \ge \frac{|\mathcal{C}|}{3p}  \bigg\}. 
\end{align*}
Then, we have
\begin{align*}
|\mathcal{F}_2| &\le |\mathcal{F}_3| \cdot \frac{3 p^2}{|\mathcal{C}|} < 3(t+2) (3t+1) p^3 | \mathcal{C}|, \\ 
|\mathcal{F}_1| &\le |\mathcal{F}_2| \cdot \frac{3 p}{|\mathcal{C}|} < 9 (t+2) (3t+1) p^4 \le \frac{|\mathcal{C}|}{3}. 
\end{align*}

In order to obtain $(C_1, \dots , C_p)$ satisfying the condition ($\star$), 
we construct a sequence of cycles satisfying additional conditions. 
\begin{claim}
For each $q \in [p]$, there exists a sequence $(C_1, \dots , C_q)$  
of $q$ distinct cycles $C_1, \dots , C_q \in \mathcal C$ satisfying the following conditions: 
\begin{itemize}
\item
$C_{h} \not\in \mathcal{F}_1$ for any $h \in [q]$, 
\item
$(C_{h}, C_{i}) \not\in \mathcal{F}_2$ for any $h, i \in [q]$ with $h < i$, and
\item
$(C_{h}, C_{i}, C_{j}) \not\in \mathcal{F}_3$ for any $h, i, j \in [q]$ with $h < i < j$.  
\end{itemize}
\end{claim}

\begin{proof}[Proof of the claim]
We show the claim by induction on $q$. 
When $q=1$, we can choose $C_1 \in \mathcal{C} \setminus \mathcal{F}_1$ arbitrarily. 
Suppose that we have $C_1, \dots , C_q \in \mathcal{C}$ satisfying the conditions in the claim, where $q \le p-1$.  
Then, we have that
\begin{align*}
& N_2 := |\{ C \in \mathcal{C} \mid (C_{h}, C) \in \mathcal{F}_2 \mbox{ for some } h \in [q] \}|  \le q \cdot  \frac{|\mathcal{C}| }{3 p}  < \frac{|\mathcal{C}|}{3} - p, \\
& N_3 := |\{ C \in \mathcal{C} \mid (C_{h}, C_{i}, C) \in \mathcal{F}_3 \mbox{ for some $h, i \in [q]$ with }  h < i  \}|  \le  q^2  \cdot  \frac{|\mathcal{C}|}{3 p^2}  
< \frac{|\mathcal{C}|}{3} 
\end{align*}
by the definitions of $\mathcal F_1$ and $\mathcal{F}_2$. 
Since $|\mathcal{C}| - |\mathcal{F}_1| - N_2 - N _3 > p \ge q+1$,
there exists a cycle $C_{q+1} \in \mathcal C$ that is different from $C_1, \dots , C_q$ such that 
$(C_1, \dots , C_q, C_{q+1})$ satisfies the conditions in the claim. 
This shows the claim by induction on $q$. 
\end{proof}

By this claim, there exists a sequence $(C_1, \dots , C_p)$  
of $p$ distinct cycles $C_1, \dots , C_p \in \mathcal C$ such that
$(C_{h}, C_{i}, C_{j}) \not\in \mathcal{F}_3$ for any $h, i, j \in [p]$ with $h < i < j$, 
which means that $(C_1, \dots , C_p)$ satisfies the condition ($\star$).
\end{proof}

\subsection{Constructing an Additive $t$-Spanner}
\label{sec:fpt3}

In this subsection, we show that we can construct an additive $t$-spanner of $G$
by using a sequence of edge-disjoint cycles satisfying the condition ($\star$).

\begin{lemma}
\label{lem:14}
For any positive integers $t$ and $k$, there exists an integer $f_3(t, k)=(t+2)^{O(k)}$ satisfying the following condition. 
If there exists a sequence $(C_1, \dots, C_p)$ of $p=f_3(t, k)$ edge-disjoint cycles of length at most $t+2$ satisfying the condition {\rm ($\star$)}, then 
there exists an edge set $E' \subseteq \bigcup_{i \in [p]} E(C_i)$ with $|E'| = k$ such that 
$H=(V, E \setminus E')$ is an additive $t$-spanner of $G$. 
\end{lemma}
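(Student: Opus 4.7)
The plan is to choose $E' = \{e_{j_1}, \ldots, e_{j_k}\}$, where $C_{j_1}, \ldots, C_{j_k}$ is a subsequence of $(C_1, \ldots, C_p)$ with $j_1 < j_2 < \cdots < j_k$, and each $e_{j_\ell}$ is an edge of $C_{j_\ell}$. Because $|E(C_{j_\ell})| \leq t+2$, the cycle minus the removed edge, $C_{j_\ell} - e_{j_\ell}$, is a path of length at most $t+1$ in $H$, i.e.\ a local detour of additive cost at most $t$. My target sufficient condition is that \emph{for every pair $x, y \in V$, some shortest $x$-$y$ path in $G$ uses at most one edge of $E'$}. Once this holds, given $x, y$ I take such a shortest path and, if it contains a removed edge $e_{j_\ell}$, substitute in the detour $C_{j_\ell} - e_{j_\ell}$; the resulting $x$-$y$ walk in $H$ has length at most $\dist_G(x, y) + t$, proving $H$ is an additive $t$-spanner of $G$.

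To enforce this sufficient condition, I would construct the subsequence greedily. At step $\ell$, given $\{e_{j_1}, \ldots, e_{j_{\ell-1}}\}$, I pick $C_{j_\ell}$ with $j_\ell > j_{\ell-1}$ and $e_{j_\ell} \in E(C_{j_\ell})$ so that no shortest path in $G$ simultaneously contains $e_{j_\ell}$ and some earlier $e_{j_s}$ ($s < \ell$). Condition {\rm ($\star$)} is the central tool for controlling the number of ``forbidden'' cycles at each step. Indeed, if a shortest $x$-$y$ path in $G$ uses both $e_{j_s} \in E(C_{j_s})$ and an edge of a later cycle $C_j$ (with $j > j_s$), then the portion of that path between the two removed edges is a shortest path from a vertex of $C_{j_s}$ to a vertex of $C_j$. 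Condition {\rm ($\star$)} forces the canonical shortest path $P(v, C_j)$ for $v \in V(C_{j_s})$ to avoid every cycle $C_{j'}$ with $j' > j$; combined with the at most $t+2$ choices for the exit vertex on $C_{j_s}$ and the at most $t+2$ choices for the entry vertex on $C_j$, I expect the bad cycle $C_j$ to be pinned down to $(t+2)^{O(1)}$ options per previously selected $e_{j_s}$. Summing over $s < \ell$ yields at most $(t+2)^{O(\ell)}$ forbidden cycles at step $\ell$, and choosing $f_3(t, k) = (t+2)^{O(k)}$ large enough leaves an unforbidden cycle available at every iteration.

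The main obstacle is making this counting precise: for each previously chosen $e_{j_s}$, I must show that only $(t+2)^{O(1)}$ subsequent cycles $C_j$ can admit a shortest path simultaneously using both $e_{j_s}$ and an edge of $C_j$. The intended argument parametrizes such a bad configuration by (i) the exit vertex of the path from $C_{j_s}$, (ii) the entry vertex into $C_j$, and (iii) the identity of the canonical intermediate path, which {\rm ($\star$)} makes essentially unique; hence the feasible cycles $C_j$ are bounded. A secondary worry is whether the criterion ``at most one removed edge per shortest path'' is too stringent; if the straightforward pigeonhole falls short, the fallback is to permit shortest paths to use multiple removed edges while constructing detours with shared structure in the spirit of Proposition~\ref{prop:12}, so that the cumulative additive overhead is still at most $t$.
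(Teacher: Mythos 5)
Your reduction to the target condition ``every pair $x,y$ has a shortest path meeting $E'$ in at most one edge'' is a sound \emph{sufficient} criterion (one local detour of length at most $t+1$ then gives additive error at most $t$), but the construction you propose to achieve it has a genuine gap, and in fact cannot work as stated. The greedy invariant you enforce --- no shortest path of $G$ contains both the new edge $e_{j_\ell}$ and an earlier chosen edge --- can be unachievable for \emph{every} pair of cycles in the family, no matter how large $p$ is. Concretely (say $t=2$): take a hub vertex $r$, attach to it many long vertex-disjoint paths, and end each path at a ``diamond'' $4$-cycle $x\mbox{-}y\mbox{-}z\mbox{-}w\mbox{-}x$ with $x$ the attachment point. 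These $4$-cycles are edge-disjoint and satisfy {\rm ($\star$)} under any ordering (canonical shortest paths between diamonds run through the hub and never touch a third diamond). Yet for any two diamonds and \emph{any} choice of one edge in each, the shortest path between the two far tips $z^{(m)},z^{(m')}$ may cross each diamond on the side containing the chosen edge, so some shortest path contains both chosen edges; your greedy is stuck at step $2$. This also kills the quantitative claim that only $(t+2)^{O(1)}$ later cycles are ``forbidden'' per chosen edge: condition {\rm ($\star$)} only controls the one canonical path $P(v,C_i)$ from $v$ to the vertex set of $C_i$, not arbitrary shortest paths between arbitrary vertex pairs, and unboundedly many cycles can be threaded by shortest paths through a single fixed edge.

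The paper's proof takes the route you relegate to a fallback: it does \emph{not} try to prevent shortest paths from meeting $E'$ more than once. It picks the cycles $C'_1,\dots,C'_k$ greedily in order, and from each cycle $C'_i$ it picks an edge $e_i$ chosen by a pigeonhole over the sets $I(e)=\{i : e\notin\bigcup_v E(P(v,C_i))\}$, i.e.\ an edge that stays outside the shortest-path forests of a constant fraction of the still-surviving future cycles; this is exactly where $f_3(t,k)=k(t+2)^{k-1}$ comes from. Then, for a shortest $x$-$y$ path $P$ meeting $E'$ in two or more edges, with $P[z,z']$ the minimal subpath containing $E(P)\cap E'$ and $e_h,e_i$ ($h<i$) the removed edges at its ends, the whole middle segment is rerouted via the canonical path $P(z,C'_i)$ plus the cycle $C'_i-e_i$. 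The crux is that $P(z,C'_i)$ avoids \emph{all} of $E'$: edges $e_j$ with $j>i$ by condition {\rm ($\star$)} (since $z\in V(C'_h)$, $h<i$), and edges $e_j$ with $j<i$ by the $I(e)$ bookkeeping; since $|E(P(z,C'_i))|\le \dist_G(z,z')-1$, the total additive cost is at most $t$ regardless of how many removed edges $P$ contained. Your proposal is missing precisely this mechanism for handling multiple intersections, so as written it does not prove the lemma.
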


\begin{proof}
We show that $p= f_3(t, k):=k (t+2)^{k-1}$ satisfies the condition. 
For each edge $e \in E$, define 
$$
I(e) := \{ i \in [p] \mid e \not\in \bigcup_{v \in V} E(P(v, C_i)) \}. 
$$
Since we assumed that $\bigcup_{v \in V} E(P(v, C_i))$ forms a forest for each $i \in [p]$, 
for any cycle $C$, there exists an edge $e \in E(C)$ such that $i \in I(e)$. 
In other words, $\bigcup_{e \in E(C)} I(e) = [p]$ for any cycle $C$. 
We prove the lemma by 
showing that Algorithm~\ref{alg1}
always finds an edge set $E' \subseteq \bigcup_{i \in [p]} E(C_i)$ with $|E'| = k$ such that 
$H=(V, E \setminus E')$ is an additive $t$-spanner of $G$. 

	\begin{algorithm}[h]
	\SetKwInOut{Input}{Input}\SetKwInOut{Output}{Output}
	\Input{A sequence $(C_1, \dots, C_p)$ of edge-disjoint cycles of length at most $t+2$ with the condition ($\star$)}
	\Output{An edge set $E'\subseteq \bigcup_{i \in [p]} E(C_i)$ with $|E'| = k$ such that 
$H=(V, E \setminus E')$ is an additive $t$-spanner}
	$I_0 := [p]$ \\
	\For{$i=1, \dots , k$}{
		Let ${\rm ind}(i)$ be the minimum index in $I_{i-1}$ \\ 
         $C'_i := C_{{\rm ind}(i)}$ \\
		Choose an edge $e_i \in E(C'_{i})$ that maximizes $|(I_{i-1} \setminus \{{\rm ind}(i)\}) \cap I(e_i)|$ \\
 		$I_i := (I_{i-1} \setminus \{{\rm ind}(i)\}) \cap I(e_i)$ 
	}
	Return $E' := \{e_1, \dots , e_k\}$ 
	\caption{Constructing an additive $t$-spanner from a sequence with ($\star$)}
	\label{alg1}
	\end{algorithm} 

We first show that the algorithm returns a set of $k$ edges. 
For $i=1, \dots , k-1$, 
since $\bigcup_{e \in E(C'_{i})} I(e) = [p]$ and $|E(C'_{i})| \le t+2$, 
we have that $|I_i| \ge \frac{|I_{i-1}|-1}{t+2}$.  
By combining this with $|I_0| = k (t+2)^{k-1}$, 
we see that $|I_i| \ge (k-i) (t+2)^{k-i-1}$ for each $i$ by induction. 
In particular, $|I_{k-1}| \ge 1$ holds, and hence the algorithm 
returns a set $E' = \{e_1, \dots , e_k\}$.

We next show that $H=(V, E \setminus E')$ is an additive $t$-spanner. 
Let $x$ and $y$ be distinct vertices in $V$ and let $P$ be a shortest $x$-$y$ path in $G$. 
If $E(P) \cap E' = \emptyset$, then it is obvious that $\dist_{H}(x, y) = \dist_{G}(x, y)$. 
If $E(P) \cap E' = \{e_i\}$ for some $i \in \{1, \dots , k\}$, then  
$\dist_{H}(x, y) \le \dist_{G}(x, y) + t + 1$ holds, 
because 
$(E(P) \setminus \{e_i\}) \cup (E(C'_{i}) \setminus \{e_i\})$
contains an $x$-$y$ path.  

Thus, it suffices to consider the case when $|E(P) \cap E'| \ge 2$. 
Let $P[z, z']$ be the unique minimal subpath of $P$ 
that contains all edges in $E(P) \cap E'$, where 
$x, z, z'$, and $y$ appear in this order along $P$.  
Then, $z$ and $z'$ are the endpoints of edges $e_h$ and $e_i$ in $E(P) \cap E'$, respectively.  
We may assume that $h < i$ by changing the roles of $x$ and $y$ if necessarily. 
We now observe the following properties of $P(z, C'_{i})$. 
\begin{itemize}
\item
Since $(C_1, \dots , C_p)$ satisfies ($\star$), 
$(C'_1, \dots , C'_k)$ also satisfies ($\star$). 
This shows that 
$P(z, C'_{i})$ does not contain edges in $E(C'_{j})$ for any $j > i$, because $z \in V(C'_h)$ and $h < i$. 
In particular, $P(z, C'_{i})$ does not contain $e_{j}$ for any $j > i$.  
\item
Since ${\rm ind}(i) \in I_{i-1} \subseteq I(e_1) \cap I(e_2) \cap \dots \cap I(e_{i-1})$ by the algorithm, 
$P(z, C'_{i})$ does not contain $e_j$ for any $j < i$. 
\item
It is obvious that $P(z, C'_{i})$ does not contain $e_i$ by the definition of $P(z, C'_{i})$. 
\end{itemize}
Hence, $P(z, C'_{i})$ does not contain edges in $E'$, which means that 
$P(z, C'_{i})$ is a path in $H$ (see Fig.~\ref{fig:04}). 
Since $C'_{i} - e_i$ contains a path connecting 
an endpoint of $P(z, C'_{i})$ and $z'$, we have that 
\begin{align*}
\dist_{H}(x, y) &\le \dist_{H}(x, z) + \dist_{H}(z, z') + \dist_{H}(z', y) \\ 
                    &\le \dist_{G}(x, z) + (|E(P(z, C'_{i}))| + |E(C'_{i})| -1) + \dist_{G}(z', y) \\ 
                    &\le \dist_{G}(x, z) + |E(P[z, z']) \setminus \{e_i\}| + t+1 + \dist_{G}(z', y) \\ 
                    &\le \dist_{G}(x, z) + (\dist_{G}(z, z') -1) + t+1 + \dist_{G}(z', y) \\
				  &= \dist_{G}(x, y)+t. 
\end{align*}
Therefore, $H$ is an additive $t$-spanner of $G$. 
\end{proof}

\begin{figure}[htbp]
  \begin{center}
    \includegraphics[width=6.3cm]{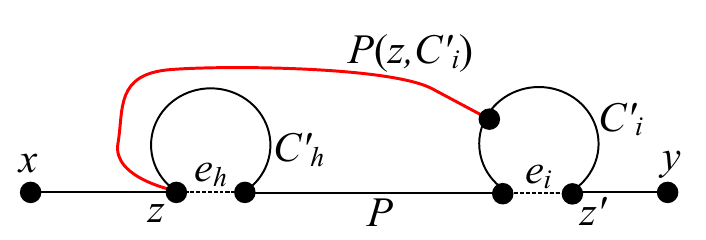}
    \caption{Proof of Lemma~\ref{lem:14}.}
    \label{fig:04}
  \end{center}
\end{figure}

\subsection{The Entire Algorithm}
\label{sec:fpt4}

In this subsection, we describe our entire algorithm for {\sc Parameterized} \ProblemName\ and prove Theorem~\ref{thm:fpt}
by using Proposition~\ref{prop:12} and Lemmas~\ref{lem:13} and~\ref{lem:14}. 
Define 
\begin{align*}
p &: = f_3(t, k),  &
N &:= f_2(t, p), &
f_4(t, k) &:= N (t+2)^2 f_1(k+t+1, t+1), 
\end{align*}
where $f_1$, $f_2$, and $f_3$ are as in Lemmas~\ref{lem:11},~\ref{lem:13}, and~\ref{lem:14}, respectively.
Then, $N = (t+2)^{O(k)}$ and $f_1(k+t+1, t+1) = (kt)^{O(t)}$, and hence 
$f_4(t, k) = (t+2)^{O(k)} \cdot (kt)^{O(t)}  =  (t+2)^{O(k)} \cdot k^{O(t)} \cdot t^{O(t)}= (t+1)^{O(k+t)}$.
Note that we can simply denote $f_4(t, k) = t^{O(k+t)}$ unless $t=1$. 

Our algorithm for {\sc Parameterized} \ProblemName\ is as follows. 
We first compute the set $F \subseteq E$ of all edges contained in cycles of length at most $t+2$. 
Note that we can do it in $O(|V| |E|)$ time 
by applying the breadth first search from each vertex. 

As described in Section~\ref{sec:outline}, if $H=(V, E\setminus E')$ is an additive $t$-spanner of $G$ for $E' \subseteq E$, then $E' \subseteq F$ holds. 
Thus, if $|F| \le f_4(t, k)$, then 
we can solve {\sc Parameterized} \ProblemName\ in $O(f_4(k, t)^k |V| |E|)$ time
by checking whether $H = (V, E \setminus E')$ is an additive $t$-spanner of $G$ or not
for every subset $E'$ of $F$ with $|E'| = k$.   

Otherwise, we have $|F| \ge f_4(t, k) = N (t+2)^2 f_1(k+t+1, t+1)$.  
Since there exist at least $\frac{|F|}{t+2} \ge N (t+2) f_1(k+t+1, t+1)$ cycles of length at most $t+2$ by the definition of $F$, 
we can take a set $\mathcal C$ of $N(t+2) f_1(k+t+1, t+1)$ cycles of length at most $t+2$.
By Proposition~\ref{prop:12} and Lemmas~\ref{lem:13} and~\ref{lem:14}, 
there always exists a set $E' \subseteq \bigcup_{C\in \mathcal{C}} E(C)$ with $|E'| = k$ such that 
$H=(V, E \setminus E')$ is an additive $t$-spanner of $G$. 
Furthermore, such $E'$ can be found in $O(((t+2) |\mathcal{C}|)^k |V| |E|) = O(f_4(k, t)^k |V| |E|)$ time 
by checking all the edge sets of size $k$ in $\bigcup_{C \in \mathcal{C}} E(C)$. 
Note that it will be possible to improve the running time of this part by following the proofs of 
Proposition~\ref{prop:12} and Lemmas~\ref{lem:13} and~\ref{lem:14}. 
However, we do not do it in this paper, because it does not improve the total running time.

Overall, we conclude that our algorithm solves {\sc Parameterized} \ProblemName\ in 
$O(f_4(k, t)^k |V| |E|) = (t+1)^{O(k^2+ t k)} |V| |E|$ time, 
and hence we obtain Theorem~\ref{thm:fpt}. 
The entire algorithm is shown in Algorithm~\ref{alg2}.

	\begin{algorithm}[h]
	\SetKwInOut{Input}{Input}\SetKwInOut{Output}{Output}
	\Input{A graph $G=(V, E)$}
	\Output{An edge set $E' \subseteq E$ with $|E'| = k$ such that 
$H=(V, E \setminus E')$ is an additive $t$-spanner (or conclude that such $E'$ does not exist)}
	Compute $F := \{e \in E \mid \mbox{$e$ is contained in some cycle of length at most $t+2$} \}$  \\
	\If{$|F| \le f_4(t, k)$}{
		\For{each $E' \subseteq F$ with $|E'| = k$}{
			\If{$H = (V, E \setminus E')$ is an additive $t$-spanner of $G$}{
				Return $E'$			
			} 
		}
		Conclude that such $E'$ does not exist
	}
	\Else{ 
		Find a set $\mathcal {C}$ of cycles of length at most $t+2$ with $|\mathcal{C}| = N (t+2) f_1(k+t+1, t+1)$ \\
		\For{each $E' \subseteq \bigcup_{C \in \mathcal C} E(C)$ with $|E'| = k$}{
			\If{$H = (V, E \setminus E')$ is an additive $t$-spanner of $G$}{
				Return $E'$			
			}
		} 
	}
	\caption{Entire Algorithm}
	\label{alg2}
	\end{algorithm} 

\section{Extension to $(\alpha, \beta)$-Spanners}
\label{sec:extension}

In this section, we extend the argument in the previous section to  $(\alpha, \beta)$-spanners 
and give a proof of Theorem~\ref{thm:fpt2}. 

Let $t := \lfloor \alpha+\beta \rfloor -1$. 
We compute the set $F \subseteq E$ of all edges contained in cycles of length at most $t+2= \lfloor \alpha+\beta \rfloor+1$. 
If $H=(V, E\setminus E')$ is an $(\alpha, \beta)$-spanner of $G$ for $E' \subseteq E$, 
then $\dist_H(u, v) \le \alpha \cdot \dist_G(u, v) + \beta \le \alpha+\beta$ for each $uv \in E'$. 
By integrality, $\dist_H(u, v) \le \lfloor \alpha+\beta \rfloor$ for each $uv \in E'$, 
which shows that $E' \subseteq F$ holds.
This implies that the problem is trivial if $t=0$. 
Thus, we consider the case when $t \ge 1$ and define $f_4(t, k)$ as in Section~\ref{sec:fpt4}.
If $|F| \le f_4(t, k)$, then 
we can solve {\sc Parameterized Minimum $(\alpha, \beta)$-Spanner Problem} in $O(f_4(k, t)^k |V| |E|)$ time
by checking whether $H = (V, E \setminus E')$ is an $(\alpha, \beta)$-spanner of $G$ or not
for every subset $E'$ of $F$ with $|E'| = k$.   

Otherwise, by the argument in Section~\ref{sec:fpt4}, 
in $O(f_4(k, t)^k |V| |E|)$ time, 
we can find an edge set $E'$ with $|E'|=k$
such that $H=(V, E \setminus E')$ is an additive $t$-spanner. 
Then, $H$ is also an $(\alpha, \beta)$-spanner, because
\begin{align*}
\dist_H(u, v) &\le \dist_G(u, v) + t \le (\dist_G(u, v) - 1) + \alpha + \beta  \\ 
                  &\le \alpha \cdot (\dist_G(u, v) - 1) + \alpha + \beta = \alpha \cdot \dist_G(u, v) + \beta
\end{align*}
for every pair of vertices $u$ and $v$. 
Therefore, it suffices to return the obtained set $E'$. 
This completes the proof of Theorem~\ref{thm:fpt2}.

\section{Conclusion}
\label{sec:conclusion}

In this paper, we studied \ProblemName\ from the viewpoint of fixed-parameter tractability. 
We formulated a parameterized version of \ProblemName\ 
in which the number of removed edges is regarded as a parameter, 
and gave a fixed-parameter algorithm for it. 
We also extended our result to {\sc Minimum $(\alpha, \beta)$-Spanner Problem}.  

As described in the last paragraph in Section~\ref{sec:spanner}, 
handling \ProblemName\ is much harder than
{\sc Minimum Multiplicative $t$-Spanner Problem}, 
because we have to care about global properties of graphs. 
Since only few results were previously known for \ProblemName, 
this work may be a starting point for further research on the problem. 






\end{document}